\documentclass[aps,prl,twocolumn, superscriptaddress,groupedaddress,nofootinbib]{revtex4-1}
\usepackage[dvipsnames]{xcolor}
\usepackage[utf8]{inputenc}
\usepackage[breaklinks]{hyperref}
\usepackage{amsmath}
\usepackage{tikz}
\usepackage{extpfeil}
\usepackage{amsfonts}
\usepackage{amssymb} 
\usepackage{bm}
\usepackage{graphicx}
\usepackage{array}
\usepackage{lipsum}
\usepackage{float}
\usepackage{multirow}
\usepackage{hhline}
\usepackage{tabularx}
\usepackage{subfigure}
\usepackage{graphicx}
\usepackage{afterpage}
\usepackage{longtable}
\usepackage{epstopdf}
\usepackage{MnSymbol}

\usepackage{amsthm}
\newtheorem{prop}{Proposition}

\newcommand{\bZ}{\mathbb{Z}}

\newcommand{\vev}[1]{\langle #1 \rangle}

\newcommand{\sdoc}{S_{\Delta_1^\circ}}
\newcommand{\sdtc}{S_{\Delta_2^\circ}}
\newcommand{\doc}{{\Delta_1^\circ}}
\newcommand{\dtc}{{\Delta_2^\circ}}
\newcommand{\textin}{\,\, \text{in} \,\,}

\def\ge{E}
\def\gso{SO}
\def\gsu{SU}
\def\gsp{Sp}
\def\gf{F}
\def\gg{G}

\begin{document}
\title{On Algorithmic Universality in F-theory Compactifications}
\author{James Halverson, Cody Long, and Benjamin Sung}
\affiliation{Department of Physics, Northeastern University \\ Boston, MA 02115-5000 USA} 

\date{\today}

\begin{abstract}
We study universality of geometric gauge sectors in the string landscape
in the context of F-theory compactifications.
 A finite time construction algorithm is presented for $\frac43 \times 2.96 \times 10^{755}$ F-theory geometries that are
connected by a network of topological transitions in a
connected moduli space. High probability geometric
assumptions uncover universal structures in the ensemble without
explicitly constructing it. For example, non-Higgsable clusters
of seven-branes with intricate gauge sectors occur with
probability above $1-1.01\times 10^{-755}$,  and
the geometric gauge group rank is above $160$ with probability
$.999995$. In the latter case there are at least $10$ $E_8$
factors, the structure of which fixes the gauge groups
on certain nearby seven-branes. Visible sectors may arise from $E_6$ or $SU(3)$ seven-branes, which occur
in certain random samples with probability $\simeq 1/200$.
\end{abstract}

\maketitle

\noindent{\bf I. Introduction.} 
String theory is a consistent theory of quantum gravity that naturally
gives rise to interesting gauge and cosmological sectors. As
such, it is a promising candidate for a unified theory. However, there is a vast 
landscape of four-dimensional metastable vacua that
may realize different physics, making predictions difficult.

A possible way forward, as in many areas of physics, is to demonstrate
universality in large ensembles. For string vacua, such as the oft-quoted
$O(10^{500})$ type IIb flux vacua
\cite{Bousso:2000xa,*Ashok:2003gk,*Denef:2004ze}, studying universality via
explicit construction is complex and impractical
\cite{Denef:2006ad,*Cvetic:2010ky}. However, it may be possible to derive
universality from a precise construction algorithm, rather than from the
constructed ensemble. We refer to this as algorithmic universality, and find it a
promising way forward in the string landscape.

We present such an algorithm in the context of $4d$ F-theory \cite{Vafa:1996xn,*Morrison:1996pp,*Morrison:1996pp}
compactifications. The ensemble is a collection of
$4/3\times2.96\times 10^{755}$ six-manifolds, perhaps the largest set
of string geometries to date, that serve as the extra spatial dimensions. Their topological structure determines the $4d$ gauge group that arises 
geometrically from
configurations of seven-branes that form a network of so-called non-Higgsable
clusters (NHC) \cite{Morrison:2012np}. We establish that non-Higgsable clusters arise with
probability above $1-1.01\times 10^{-755}$ in this ensemble, and demonstrate that a rich minimal gauge structure arises with high probability. We also
present results from random sampling that are potentially relevant for
visible sectors.

A number of recent results suggest that NHC are 
important in the $4d$ F-theory landscape. They exist for generic vacuum expectation values of
 scalar fields (complex structure moduli), and therefore
gauge symmetry does not require stabilization on subloci in 
moduli space~\cite{Grassi:2014zxa}, which can have high codimension \cite{Braun:2014xka,*Watari:2015ysa,*Halverson:2016tve}. Standard model structures may arise naturally \cite{Grassi:2014zxa}, strong coupling is generic \cite{Halverson:2016vwx}, and $4d$ NHC may exhibit features \cite{Morrison:2014lca} (such as loops and branches) not present
in $6d$. NHC arise in the geometry with the largest number of flux
vacua \cite{Taylor:2015xtz}, and universally in known ensembles 
\cite{Halverson:2015jua,*Taylor:2015ppa}  closely related to ours.
$6d$ NHC have been studied extensively \cite{Morrison:2012np,Morrison:2012js,*Taylor:2012dr,*Morrison:2014era,*Martini:2014iza,*Johnson:2014xpa,*Taylor:2015isa}.

In Sec. II we review non-Higgsable clusters. In Sec. III we present our ensemble.
In Sec. IV we exhibit universality. In Sec. V we discuss our results.

\vspace{.2cm}
\noindent{\bf II. Seven-Branes and Non-Higgsable Clusters.}
A $4d$ F-theory geometry is a Calabi-Yau elliptic fibration
$X$ over six extra spatial dimensions  described by a complex threefold base space $B$ defined by the equation
\begin{equation}
y^2=x^3+f x + g
\end{equation}
where $f$ and $g$ are polynomials in the coordinates
of $B$;  technically, $f\in \Gamma(\mathcal{O}(-4K_B))$, $g\in \Gamma(\mathcal{O}(-6K_B))$, with $K_B$ the canonical class.
Seven-branes are localized on the discriminant locus
$\Delta=4f^3+27g^2=0\subset B$.

Upon compactification the gauge group structure of seven-branes gives rise
to four-dimensional gauge sectors. It is controlled by $f$ and
$g$, and for a typical $B$ the most general $f, g$ take the
form $f=\tilde f \prod_i x_i^{l_i}$, $g=\tilde g \prod_i x_i^{m_i}$,
so
\begin{equation}
\Delta = \tilde \Delta\,\,  \prod_i x_i^{\text{min}(3l_i,2m_i)}=: \tilde \Delta\,\,  \prod_i x_i^{n_i},
\end{equation}
and therefore $f$, $g$, and $\Delta$ vanish along $x_i=0$ to 
$mult_{x_i=0}(f,g,\Delta)=(l_i,m_i,n_i).$ This seven-brane carries
a gauge group $G_i$ given in Table \ref{tab:gauge}
according to the Kodaira classification. In some cases
further geometric data is necessary to uniquely 
specify $G_i$ (see e.g. 
\cite{Halverson:2015jua} for conditions) but this data
always exists for fixed $B$. 
For generic $f$ and $g$ a seven-brane on $x_i=0$
requires $(l_i,m_i)\geq(1,1)$.

Such a seven-brane is called a geometrically non-Higgsable
seven-brane (NH7) because it carries a gauge group that cannot
be removed by deforming $f$ or $g$. A NH7 may have geometric gauge group
\begin{equation}
G\in \{E_8,E_7,E_6,F_4,SO(8),SO(7),G_2,SU(3),SU(2)\}, \nonumber
\end{equation} 
which could be broken by fluxes. We assume fluxes can be turned on in a large fraction
of our geometries.
A typical base $B$, as
we will show in the strongest generality to date, has many non-Higgsable seven-branes
that often intersect in pairs, giving rise to jointly charged
matter. This is a geometrically
non-Higgsable cluster (NHC). For brevity, we henceforth drop 
geometric and geometrically.

\begin{table}[t]
\scalebox{.9}{\begin{tabular}{|c|c|c|c|c|c|}
\hline
$F_i$ & $l_i$ & $m_i$ & $n_i$ & Sing. & $G_i$ \\ \hline \hline
$I_0$&$\geq $ 0 & $\geq $ 0 & 0 & none & none \\
$I_n$ &0 & 0 & $n \geq 2$ & $A_{n-1}$ & $\gsu(n)$  or $\gsp(\lfloor
n/2\rfloor)$\\
$II$ & $\geq 1$ & 1 & 2 & none & none \\
$III$ &1 & $\geq 2$ &3 & $A_1$ & $\gsu(2)$ \\
$IV$ & $\geq 2$ & 2 & 4 & $A_2$ & $\gsu(3)$  or $\gsu(2)$\\
$I_0^*$&
$\geq 2$ & $\geq 3$ & $6$ &$D_{4}$ & $\gso(8)$ or $\gso(7)$ or $\gg_2$ \\
$I_n^*$&
2 & 3 & $n \geq 7$ & $D_{n -2}$ & $\gso(2n-4)$  or $\gso(2n -5)$ \\
$IV^*$& $\geq 3$ & 4 & 8 & $E_6$ & $\ge_6$  or $\gf_4$\\
$III^*$&3 & $\geq 5$ & 9 & $E_7$ & $\ge_7$ \\
$II^*$& $\geq 4$ & 5 & 10 & $E_8$ & $\ge_8$ \\ \hline
\end{tabular}}
\caption{Kodaira fiber $F_i$, singularity, and gauge group $G_i$ on
the seven-brane at $x_i=0$ for given $l_i$, $m_i$, and $n_i$.}
\label{tab:gauge}
\end{table}

\vspace{.2cm}
\noindent{\bf III. Large Landscapes of Geometries from Trees.}

We now introduce our construction, which utilizes building blocks in toric
varieties that we call trees to systematically build up F-theory geometries.
After describing the geometric setup and defining terms that simplify the
discussion, we will present a criterion, classify all trees satisfying it, and
build the F-theory geometries.

Our construction begins with a smooth weak-Fano toric threefold $B_i$, and
then builds structure on top of it. Each geometry $B_i$ is  determined by a
fine regular star triangulations (FRST) of one of the $4319$  $3d$ reflexive
polytopes \cite{Kreuzer:1998vb}; there are an estimated
 $O(10^{15})$ such geometries \cite{Halverson:2016tve}. The $2d$
faces of the $3d$ polytope are known as facets, and a triangulated polytope
will have triangulated facets.  Such $B_i$ do not support NHC; see~\cite{Halverson:2016tve}.

Consider such a $B_i$ determined by an FRST of a $3d$ reflexive polytope
$\Delta^\circ$, a triangulated facet $F$ in $\Delta^\circ$, and an edge
between two points $v_1$ and $v_2$ in $F$ with
associated homogeneous coordinates $x_1$ and $x_2$. Since $v_{1,2}$ are connected
by an edge, $x_1=x_2=0$ defines a Riemann surface (algebraic curve)
in $B_i$, which can be ``blown up" using a new ray $v_e=v_1+v_2$ and subdividing
cones using standard toric techniques. This is a topological transition that introduces
a new (``exceptional") divisor $e=0$ in $B$, where $e$ is the coordinate associated to $v_e$.
This
process can be iterated, for example blowing up along $e=x_1=0$, which would
add a new ray $v_e+v_1=2v_1+v_2$.

After a number of iterations the associated toric variety will have a
collection of exceptional divisors with associated rays $v_{e_i}=a_i v_1 + b_i
v_2$, which will appear to have formed a \emph{tree} above the \emph{ground}
that connects $v_1$ and $v_2$ in $F$. Each $v_{e_i}$ is a $\emph{leaf}$ with
\emph{height} $h_{e_i}=a_i+b_i$, and  we will refer to trees built on edges
within $F$ as $\emph{edge trees}$. The height of a tree is the height of its
highest leaf. As an example, $\{v_1+v_2,2v_1+v_2,v_1+2v_2\}$ appears as
\begin{center} 
\begin{tikzpicture}[scale=.85]
 \draw[thick,color=Black] (.25,.5) --
(-.25,.5);\draw[thick,dash pattern={on 1pt off 1pt},color=ForestGreen] (-.25,.5)--(-.25,1.5)--(0,1)--(.25,1.5)--(.25,.5); \fill (0,0) circle (.5mm); \fill (.25,.5) circle (.5mm); \fill
(-.25,.5) circle (.5mm); \fill (0,1) circle (.5mm); \fill (-.25,1.5) circle
(.5mm); \fill (.25,1.5) circle (.5mm); \draw (0,0) -- (.25,.5); \draw (0,0) --
(-.25,.5); \draw (0,0) -- (0,1); \draw (0,0) -- (-.25,1.5); \draw (0,0) --
(.25,1.5); \node at (-.55,.5) {$v_1$}; \node at (.55,.5) {$v_2$}; \node at
(0,-.25) {$0$}; \node at (0,1.25) {$2$}; \node at (-.45,1.6) {$3$}; \node at
(.45,1.6) {$3$}; 
\end{tikzpicture} 
\end{center} where the $v_1$ to $v_2$ line is the
edge (ground) in $F$, dashed green lines are above the ground, $0$ is the origin of $\Delta^\circ$, and the new rays
are labeled by height.

Similarly, one can also build \emph{face trees} by beginning with a face on
$F$, with vertices $v_1, v_2, v_3$ associated to $x_1, x_2, x_3$. Adding $v_e=v_1+v_2+v_3$ and subdividing
appropriately blows up the point $x_1=x_2=x_3=0$  and produces a new toric
variety. Again such blowups can be iterated. This process builds a collection
of leaves $v_{e_i}=a_i v_1 + b_i v_2 + c_i v_3$ with $a_i,b_i,c_i > 0$ of
height $h_{e_i}=a_i+b_i+c_i$ that comprise a \emph{face tree}. Face trees are
built above the interior of the face due to the strict inequality in the
definition. Note if one leaf coefficient was zero the associated leaf would be
above an edge of the face, not above the face interior.

Geometries can be systematically constructed by adding a face tree to each face in each triangulated facet of $\Delta^\circ$, and then an edge tree to each edge. The associated smooth toric
threefold $B$ has a collection of rays $v$, each of which can be written $v=av_1+bv_2+cv_3$ with $v_i$ $3d$ cone vertices in $B_i$. If $(a,b,c)=(1,0,0)$ or some
permutation thereof, $v\in \Delta^\circ$ and this height $h_v=1$ ``leaf'' is more
appropriately a root, since it is on the ground.

A natural question
in systematically building up geometries is whether there is
a maximal tree height. For a toric variety $B$ to be an allowed
F-theory base it must not have any so-called $(4,6)$ divisors (see Appendix), which
we ensure by a simple height criterion proven in Prop. \ref{prop:heightcrit}:
\begin{center}
If $h_v\leq 6$ for all leaves $v\in B$, \\ then there are no $(4,6)$
divisors.
\end{center}
This condition is simple and sufficient, but not necessary,
for the absence of $(4,6)$ divisors. Nevertheless, it
will allow us to build a large class of geometries.

\vspace{.5cm}
The task is now clear: we must systematically build  all
topologically distinct edge trees and face trees of height 
$\leq 6$. Since the combinatorics are daunting, let us
exemplify the problem for $h\leq 3$ trees. Viewing the
facet head on, an edge in $F$ appears as 
\begin{center}
\begin{tikzpicture}
\draw[thick,color=Black] (0,0) -- (1,0);
\fill (0,0) circle (.5mm);
\fill (1,0) circle (.5mm);
\node at (0,.3) {$v_1$};
\node at (1,.3) {$v_2$};
\node at (0,-.3) {$1$};
\node at (1,-.3) {$1$};
\end{tikzpicture}
\end{center}
with the vertices and their heights labeled.
Adding $v_1+v_2$ subdivides the
edge, and further subdivision gives
\begin{center}
\begin{tikzpicture}
\draw[thick,color=Black] (0,0) -- (1,0);
\fill (0,0) circle (.5mm);
\fill (1,0) circle (.5mm);
\node at (0,-.3) {$1$};
\node at (1,-.3) {$1$};
\draw[thick,->] (1.25,0) -- (1.75,0);
\draw[thick,dash pattern={on 1pt off 1pt},color=ForestGreen] (2,0) -- (3,0);
\fill (2,0) circle (.5mm);
\node at (2,-.3) {$1$};
\fill (2.5,0) circle (.5mm);
\node at (2.5,-.3) {$2$};
\fill (3,0) circle (.5mm);
\node at (3,-.3) {$1$};
\draw[thick,->] (3.25,.1) -- (3.75,.38);
\draw[thick,->] (3.25,-.1) -- (3.75,-.38);
\draw[thick,->] (5.25,.38) -- (5.75,.1);
\draw[thick,->] (5.25,-.38) -- (5.75,-.1);
\draw[thick,dash pattern={on 1pt off 1pt},color=ForestGreen] (4,.5) -- (5,.5);
\fill (4,.0+.5) circle (.5mm);
\node at (4,-.3+.5) {$1$};
\fill (4.5,0+.5) circle (.5mm);
\node at (4.5,-.3+.5) {$2$};
\fill (4.75,0+.5) circle (.5mm);
\node at (4.75,-.3+.5) {$3$};
\fill (5,0+.5) circle (.5mm);
\node at (5,-.3+.5) {$1$};
\draw[thick,dash pattern={on 1pt off 1pt},color=ForestGreen] (4,-.5) -- (5,-.5);
\fill (4,.0-.5) circle (.5mm);
\node at (4,-.3-.5) {$1$};
\fill (4.5,0-.5) circle (.5mm);
\node at (4.5,-.3-.5) {$2$};
\fill (4.25,0-.5) circle (.5mm);
\node at (4.25,-.3-.5) {$3$};
\fill (5,0-.5) circle (.5mm);
\node at (5,-.3-.5) {$1$};
\draw[thick,->] (5.25,.38) -- (5.75,.1);
\draw[thick,->] (5.25,-.38) -- (5.75,-.1);
\draw[thick,dash pattern={on 1pt off 1pt},color=ForestGreen] (6,0) -- (7,0);
\fill (6,.0) circle (.5mm);
\node at (6,-.3) {$1$};
\fill (6.25,0) circle (.5mm);
\node at (6.25,-.3) {$3$};
\fill (6.5,0) circle (.5mm);
\node at (6.5,-.3) {$2$};
\fill (6.75,0) circle (.5mm);
\node at (6.75,-.3) {$3$};
\fill (7,0) circle (.5mm);
\node at (7,-.3) {$1$};
\end{tikzpicture}
\end{center}
where we have dropped the vertex labels and kept the heights.
The trees emerge out of the page, but visualization is made easier
by projecting on to the edge; the right-most tree is the
one previously presented vertically. There are five
edge trees with height $\leq 3$. Similarly,
\begin{center}
\begin{tikzpicture}[scale=0.9, every node/.style={scale=0.9}]
\draw[thick,color=Black] (90:.75) -- (90+120:.75) -- (90+120+120:.75) -- cycle;
\fill (90:.75) circle (.5mm);
\fill (90+120:.75) circle (.5mm);
\fill (90+240:.75) circle (.5mm);
\node at (90:1) {$1$}; \node at (90+120:1) {$1$}; \node at (90+240:1) {$1$};
\draw[thick,->] (1.25,.1) -- (1.75,.1);
\begin{scope}[xshift=3cm]
\draw[thick,dash pattern={on 1pt off 1pt},color=ForestGreen] (90:.75) -- (0,0);
\draw[thick,dash pattern={on 1pt off 1pt},color=ForestGreen] (90+120:.75) -- (0,0);
\draw[thick,dash pattern={on 1pt off 1pt},color=ForestGreen] (90+240:.75) -- (0,0);
\fill (0,0) circle (.5mm);
\draw[thick,color=Black] (90:.75) -- (90+120:.75) -- (90+120+120:.75) -- cycle;
\fill (90:.75) circle (.5mm);
\fill (90+120:.75) circle (.5mm);
\fill (90+240:.75) circle (.5mm);
\node at (90:1) {$1$}; \node at (90+120:1) {$1$}; \node at (90+240:1) {$1$};
\node at (0,-.2) {$3$};
\end{scope}
\end{tikzpicture}
\end{center}
shows that there are $2$ face trees of height $\leq 3$. Here we have denoted the new edges by green lines since they do
not sit in the facet. With our definitions, edge trees are built above an edge
in the facet, whereas higher leaves in face trees may be built on new edges
that do not sit in the facet. For example, a height $4$ leaf could be added
on any of the green lines above. A (tedious) straightforward calculation shows that the number of
edge or face trees with $h \leq N$ grows rapidly with $N$, as in
Table \ref{tab:numedgefacetreeandprob}.
\begin{table}
\begin{tabular}{|c|c|c|}
\hline
$N$ & \# Edge Trees & \# Face Trees \\ \hline
$3$ & $5$ & $2$\\
$4$ & $10$ & $17$\\
$5$ & $50$ & $4231$ \\
$6$ & $82$ & $41,873,645$\\ \hline
\end{tabular} \hspace{1cm}
\begin{tabular}{|c|c|}
\hline
$h_v$ & Probability \\ \hline
$3$ & $.99999998$ \\
$4$ & $.999995$ \\
$5$ & $.999997$  \\
$6$ & $.999899$ \\ \hline
\end{tabular}
\caption{\emph{Left:} The number of edge trees and face trees with height $h\leq N$.
\emph{Right:} The probability that a face tree with $h\leq 6$ has a leaf $v$ with
a given height $h_v$.}
\label{tab:numedgefacetreeandprob}
\end{table}

\vspace{.5cm}
Having classified the number of $h\leq 6$ face trees and edge trees, we now give a lower bound for
the number of F-theory geometries that arise from building trees on an FRST of 
$\Delta^\circ$, denoted $\mathcal{T}(\Delta^\circ)$. 
We construct an ensemble $S_{\Delta^\circ}$ of geometries by systematically putting $h\leq 6$ face trees on all
faces $\tilde F$ of $\mathcal{T}(\Delta^\circ)$ and then putting $h\leq 6$ edge trees on
all edges $\tilde E$ of $\mathcal{T}(\Delta^\circ)$. Using Table \ref{tab:numedgefacetreeandprob}, the size of $S_{\Delta^\circ}$ is 
\begin{equation}
|S_{\Delta^\circ}| = 82^{\# \tilde E \, \text{on} \, \mathcal{T}(\Delta^\circ)} \times (4.19\times 10^6)^{\# \tilde F \, \text{on} \, \mathcal{T}(\Delta^\circ)}\, . 
\end{equation}
$\# \tilde E$ and $\# \tilde F $ are triangulation-independent and are entirely determined by $\Delta^\circ$~\cite{DeLoera:2010:TSA:1952022}.

Two $3d$ reflexive polytopes give a far larger number $|S_{\Delta^\circ}|$
than the others. They
are  the convex hulls $\Delta_i^\circ := \text{Conv}(S_i), i=1,2$ of the
vertex sets
\begin{align} 
S_1 &= \{ (-1,-1,-1),(-1,-1,5),(-1,5,-1),(1,-1,-1)\}\, , \nonumber \\
S_2 &= \{ (-1,-1,-1),(-1,-1,11),(-1,2,-1),(1,-1,-1)\}. \nonumber
\end{align}
$\mathcal{T}(\doc)$ and $\mathcal{T}(\dtc)$ have the same number of edges and faces. 
Their largest facets are displayed in  Fig. \ref{fig:bigfacetbigone1} and have $\# \tilde E = 63$ and $\# \tilde F=36$. We compute
\begin{equation}
|\sdoc| = \frac{2.96}{3} \times 10^{755} \qquad |\sdtc| = 2.96 \times 10^{755},
\label{eqn:sdocsdtccounts}
\end{equation}
where the factor of $1/3$ is due to the symmetries discussed in the Appendix.
All other polytopes $\Delta^\circ$ contribute negligibly:
$|S_{\Delta^\circ}| \leq 3.28\times 10^{692}$ 
configurations. This gives
\begin{equation}
\text{\# 4d F-theory Geometries} \geq \frac43 \times 2.96 \times 10^{755},
\end{equation}
which undercounts due to the facts that we choose to do face blowups followed by
edge blowups to simplify the subdivision combinatorics, and
that we have not taken into account the $O(10^{15})$ FRSTs of 
$\dtc$ and $\doc$.

\vspace{.2cm}
\noindent{\bf IV. Universality and Non-Higgsable Clusters.}

We now study universality in the dominant sets of
F-theory geometries $\sdoc$ and $\sdtc$. We prove non-Higgsable cluster universality, minimal gauge group universality,
and discuss results from random sampling.
\begin{figure}[t]
\begin{tikzpicture}[scale=.8]
\draw[thick,color=Black] (0,0) -- (3,0) -- (0,3) -- cycle;
\draw[thick,color=Black] (0,.5) -- (2.5,.5);
\draw[thick,color=Black] (0,1) -- (2,1);
\draw[thick,color=Black] (0,1.5) -- (1.5,1.5);
\draw[thick,color=Black] (0,2) -- (1,2);
\draw[thick,color=Black] (0,2.5) -- (.5,2.5);
\draw[thick,color=Black] (.5,0) -- (.5,2.5);
\draw[thick,color=Black] (1,0) -- (1,2);
\draw[thick,color=Black] (1.5,0) -- (1.5,1.5);
\draw[thick,color=Black] (2,0) -- (2,1);
\draw[thick,color=Black] (2.5,0) -- (2.5,.5);
\draw[thick,color=Black] (0,2) -- (.5,2.5);
\draw[thick,color=Black] (0,1) -- (1,2);
\draw[thick,color=Black] (0,0) -- (1.5,1.5);
\draw[thick,color=Black] (1,0) -- (2,1);
\draw[thick,color=Black] (2,0) -- (2.5,.5);
\draw[thick,color=Black] (0,1.5) -- (.5,2);
\draw[thick,color=Black] (0,.5) -- (1,1.5);
\draw[thick,color=Black] (.5,0) -- (1.5,1);
\draw[thick,color=Black] (1.5,0) -- (2,.5);
\fill (0,0) circle (.5mm); \fill (0,.5) circle (.5mm); \fill (0,1) circle (.5mm);
\fill (0,1.5) circle (.5mm); \fill (0,2) circle (.5mm); \fill (0,2.5) circle (.5mm);
\fill (0,3) circle (.5mm);
\fill (.5,0) circle (.5mm); \fill (.5,.5) circle (.5mm); \fill (.5,1) circle (.5mm);
\fill (.5,1.5) circle (.5mm); \fill (.5,2) circle (.5mm); \fill (.5,2.5) circle (.5mm);
\fill (1,0) circle (.5mm); \fill (1,.5) circle (.5mm); \fill (1,1) circle (.5mm);
\fill (1,1.5) circle (.5mm); \fill (1,2) circle (.5mm); 
\fill (1.5,0) circle (.5mm); \fill (1.5,.5) circle (.5mm); \fill (1.5,1) circle (.5mm);
\fill (1.5,1.5) circle (.5mm); 
\fill (2,0) circle (.5mm); \fill (2,.5) circle (.5mm); \fill (2,1) circle (.5mm);
\fill (2.5,0) circle (.5mm); \fill (2.5,.5) circle (.5mm);
\fill (3,0) circle (.5mm);
\draw[thick,color=Black] (1,3) -- (7,3) -- (7,0) -- cycle;
\draw[thick,color=Black] (3,2) -- (7,2);
\draw[thick,color=Black] (5,1) -- (7,1);
\draw[thick,color=Black] (6.5,1) -- (6.5,3);
\draw[thick,color=Black] (6,1) -- (6,3);
\draw[thick,color=Black] (5.5,1) -- (5.5,3);
\draw[thick,color=Black] (5,1) -- (5,3);
\draw[thick,color=Black] (4.5,2) -- (4.5,3);
\draw[thick,color=Black] (4,2) -- (4,3);
\draw[thick,color=Black] (3.5,2) -- (3.5,3);
\draw[thick,color=Black] (3,2) -- (3,3);
\draw[thick,color=Black] (7,1) -- (6,3);
\draw[thick,color=Black] (6.5,1) -- (5.5,3);
\draw[thick,color=Black] (6,1) -- (5,3);
\draw[thick,color=Black] (5.5,1) -- (4.5,3);
\draw[thick,color=Black] (5,1) -- (4,3);
\draw[thick,color=Black] (7,2) -- (6.5,3);
\draw[thick,color=Black] (4,2) -- (3.5,3);
\draw[thick,color=Black] (3.5,2) -- (3,3);
\draw[thick,color=Black] (7,0) -- (6.5,1);
\draw[thick,color=Black] (7,0) -- (6,1);
\draw[thick,color=Black] (7,0) -- (5.5,1);
\draw[thick,color=Black] (5,1) -- (4.5,2);
\draw[thick,color=Black] (5,1) -- (4,2);
\draw[thick,color=Black] (5,1) -- (3.5,2);
\draw[thick,color=Black] (3,2) -- (2.5,3);
\draw[thick,color=Black] (3,2) -- (2,3);
\draw[thick,color=Black] (3,2) -- (1.5,3);
\fill (1,3) circle (.5mm); \fill (1.5,3) circle (.5mm); \fill (2,3) circle (.5mm); \fill (2.5,3) circle (.5mm);
\fill (3,3) circle (.5mm); \fill (3.5,3) circle (.5mm); \fill (4,3) circle (.5mm); \fill (4.5,3) circle (.5mm);
\fill (5,3) circle (.5mm); \fill (5.5,3) circle (.5mm); \fill (6,3) circle (.5mm); \fill (6.5,3) circle (.5mm);
\fill (7,3) circle (.5mm);
\fill (7,2) circle (.5mm);
\fill (7,1) circle (.5mm);
\fill (7,0) circle (.5mm);
\fill (6.5,1) circle (.5mm);
\fill (6,1) circle (.5mm);
\fill (5.5,1) circle (.5mm);
\fill (5,1) circle (.5mm);
\fill (6.5,2) circle (.5mm);
\fill (6,2) circle (.5mm);
\fill (5.5,2) circle (.5mm);
\fill (5,2) circle (.5mm);
\fill (4.5,2) circle (.5mm);
\fill (4,2) circle (.5mm);
\fill (3.5,2) circle (.5mm);
\fill (3,2) circle (.5mm);

\end{tikzpicture}

\caption{The largest facets in the two 3d reflexive polytopes $\Delta_1^\circ$ and $\Delta_2^\circ$
with the most number of interior points. Presented is one triangulation of each,
from which we see $\#\tilde E=63$ edges and $\#\tilde F=36$ faces in both facets.}
\label{fig:bigfacetbigone1}
\end{figure}
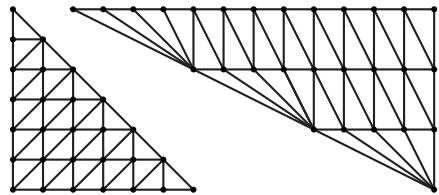

\vspace{.2cm}
\noindent \textbf{Algorithmic Universality and Gauge Groups.} We wish to establish
the likelihood that an F-theory base in $\sdoc$ or $\sdtc$ give rise to
non-Higgsable seven-branes. The result arises from Prop. \ref{prop:NH7fromsingletree}:
if there is a tree anywhere on $F$, even a single leaf, there is a
non-Higgsable seven-brane on all divisors associated to interior points of $F$.
For any $S_{\Delta^\circ}$ only one configuration has no trees, and therefore
\begin{equation}
P(\text{NHC in } S_{\Delta^\circ}) \geq 1 - \frac{1}{|S_{\Delta^\circ}|}.
\end{equation}
This is always very close to one, and in particular
\begin{align}
P(\text{NHC in } \sdoc) &\geq1-1.01\times 10^{-755} \nonumber \\
P(\text{NHC in } \sdtc) &\geq 1-.338\times 10^{-755}.
\end{align}
We see that NHC are universal in these ensembles.

We now wish to study physics in
our ensemble. Consider a geometric assumption $A_i$
and a physical property $P_i$ such that $A_i\implies P_i$. Our goal is to determine high
probability assumptions that lead to interesting physical properties, computing $P(A_i)$
since $A_i\implies P_i$ ensures $P(P_i)\geq P(A_i).$ We will focus on $\sdoc$ and $\sdtc$ since
these are the dominant ensembles.

Consider first $\sdoc$ and let $A_1$ be the assumption that any simplex in an FRST 
of $\doc$ containing a vertex of $\doc$ has an $h\geq 3$ face tree on it. 
For the $3$ symmetric facets of $\doc$ there are $17$ ways to choose simplices
containing the vertices, and $1796$ ways for its largest facet. The maximum
number of simplices containing vertices is $24$. Using $P(h\geq 3 \,\, \text{tree on simplex})$ from Table \ref{tab:numedgefacetreeandprob},
\begin{equation}
P(A_1\textin\sdoc)\geq .9999998^{24}=.999995.
\end{equation}
There are $17^3\times1796$ ways to choose simplices that contain the vertices,
all of which yield $G\geq F_4^{18}\times E_6^{10}\times U^9$ where 
$U\in\{G_2,F_4,E_6\}$, depending on details. All of these factors arise on
the ground, and generally there will be many more factors from non-Higgsable seven-branes
in the leaves. Here $E_6^{10}$ arises from an $E_6$ on every interior point
of the large facet in $\doc$, see Fig. \ref{fig:bigfacetbigone1}. This set of
statements defines physical property $P_1$, and since $A_1\implies P_1$
we deduce $P(P_1\textin \sdoc)\geq P(A_1\textin \sdoc) \geq .999995$.

Let $A_2$ be the assumption that there exists a $h=5$ face tree somewhere
on the large facet $F$ in $\doc$. Knowing $\tilde F=36$ on $F$ and using
Table \ref{tab:numedgefacetreeandprob}, we compute $P(A_2 \textin \sdoc)=(1-(1-.999997)^{36})\simeq 1-10^{-199}.$ Let $A_3$ be that $A_1$ and $A_2$ hold, so 
$P(A_3)=P(A_1)P(A_2)\simeq P(A_1)$. Then given $A_3$ a short calculation shows 
that the  $h=5$ tree on $F$ 
enhances $E_6$ in $P_1$ to $E_8$, giving $10$ $E_8$'s on the ground. $P_1$ with this enhancement defines $P_3$. 

Similar results hold for $\sdtc$. Let $A_1$ be the assumption that any simplex in an FRST 
of $\dtc$ containing a vertex of $\dtc$ has an $h\geq 3$ face tree on it. This ensures that $G\geq F_4^{15}\times E_6^{7}\times U^{12}$. However, this is quickly enhanced to  $G\geq F_4^{18}\times E_8^{10}\times U^9$, via a $h=5$ face tree on each face, and a $1-6.55\times 10^{-8}$ probability blow-up along an edge connecting the point $\{-1, 2, -1\}$ to one of the points $\{-1, 1, n\}$, where $n = -1\dots 3$. The existence of these edges is independent of triangulation. Summarizing, the probability that a geometry in our set has
$G\geq F_4^{18}\times E_8^{10}\times U^9$ on the ground is $\geq .999995$. This minimal group
for $P_3$ on $\sdtc$ matches that of $P_3$ on $\sdoc$.

It is natural to ask whether this structure on the ground constrains the gauge
structure in the trees. In Prop.~\ref{prop:E8roots} it is shown that the 
 gauge group on a leaf $v$ in a tree built above $E_8$'s on the ground
is determined by the leaf height $h_v$. 
The result is that a $h_v=1,2,3,4,5,6$ leaf
above $E_8$ roots has Kodaira fiber $F_v=II^*,IV^*_{ns},I^*_{0,ns},IV_{ns},II,-$
with gauge group $G_v=E_8,F_4,G_2,SU(2),-,-$, respectively.

This leads to a high probability result about the structure of the  gauge group.
Since $A_3\implies P_3$, which has at least $10$ $E_8$ factors nearby one another, $P_3$
also has
\begin{equation}
G\geq E_8^{10} \times F_4^{18}\times U^9 \times F_4^{H_2}\times G_2^{H_3}\times A_1^{H_4},
\end{equation}
where $H_i$ is the number of height $i$ leaves in trees built on $E_8$ roots,
and $rk(G)\geq 160+4H_2+2H_3+H_4$.
There are $H_6$ Kodaira type $II$ seven-branes that do not carry a gauge group but realize
Argyres-Douglas theories on D3 probes. The first $F_4$ and also
the $U$ factors may enhance, but the other factors are fixed. The probability
of this physical property is $P(P_3)\geq P(A_3) \simeq .999995$. This non-trivial minimal gauge structure is universal in our large ensembles given by $\sdoc$ and $\sdtc$.

\vspace{.2cm}
\noindent \textbf{Random Samples and Geometric Visible Sectors.} 
It may be possible to accommodate visible sectors from flux breaking
these gauge sectors, but it is also interesting to study whether gauge
factors $E_6$ and/or $SU(3)$ arise with high probability.
We have not yet discovered a high probability simple geometric assumption
that leads to $E_6$ or $SU(3)$. However, it is possible that they arise regularly,
but due to a complex geometric assumption.

This idea can be tested by random sampling. Let $B$ be an F-theory base obtained
by adding face trees then edge trees at random, followed by edge trees at random, to the ``pushing" 
triangulation~\cite{DeLoera:2010:TSA:1952022} of $\doc$. We studied an ensemble $S_r$ of $10^6$ such random samples 
and found $P(SU(3) \, \text{or} \, E_6 \textin S_r)\simeq 1/200$, and that at least 
$36$ of the points in $\doc$ carried $E_8$, a significant enhancement beyond
$P_3$.  Furthermore, in our sample we found that $E_6$ only
arose on the point $(1,-1,-1)$, which is the only vertex of $\doc$
that is not in the largest facet.  Similar results and probabilities also hold
using these techniques on $\dtc$. It would be interesting to study random samples of other triangulations, or to
see if other geometric assumptions imply these enhancements. We 
leave the systematic study of geometric visible sectors to future work.

\vspace{.2cm}
\noindent \textbf{V. Discussion.} 
We have presented a construction algorithm for  $\frac43 \times  2.96\times
10^{755}$ geometries for $4d$ F-theory compactifications. This number is only
a lower bound and may be enlarged in at least three ways: by relaxing
the requirement of edge blowups after  face blowups, by taking into account
the $O(10^{15})$ FRS triangulations of 3d reflexive polytopes, and by
considering blow-ups of non-toric intersections of seven-branes.

We have initiated the study of this ensemble by focusing on the geometric gauge group.
Using knowledge of the construction algorithm, we derived the existence of universal
properties for the minimal geometric gauge group on non-Higgsable clusters. High
rank groups are generic, as is the existence of at least $10$ $E_8$ factors on the ground.
The gauge group on leaves above these $E_8$ factors on the ground is fixed entirely by their height. Such large gauge sectors motivate dark glueballs; see \cite{Halverson:2016nfq,*Soni:2016yes,*daRocha:2017cxu,*Acharya:2017szw,*Soni:2017nlm}.

There are many directions for future work. For example, it would be
interesting to study the number of consistent fluxes per geometry and how they alter the gauge group, to
perform a statistical analysis of the gauge group in the leaves, or to analyze
physics that arises from blow-ups of non-toric intersections of seven-branes. 
Perhaps most pressing is that, though we have demonstrated that the gauge group is generically 
high rank and have reviewed some possible realizations of the standard model discussed in \cite{Grassi:2014zxa}, it is not yet clear whether the standard model is realized with high probability in our ensemble.

We believe that this is the first time that such a large ensemble has been systematically
studied in string theory. In our view, the crucial ingredient that made the results
possible are what we call algorithmic
universality: derivation of universality from a construction algorithm, rather than
an explicitly constructed ensemble or random sampling. Given the plethora of large ensembles in string theory
and the infeasibility of constructing all of them, universality of this sort may play
a critical role in making the string landscape tractable.

\vspace{.2cm}
\noindent \textbf{A. Appendix: Technical Subtleties.} We now
address technical subtleties that are important for establishing,
but not understanding, results in the main text.

\vspace{.2cm}
\noindent \textbf{Polytope Symmetries and Toric Morphisms.}
In equation \eqref{eqn:sdocsdtccounts} 
we have included a factor of
$1/3$ relative to the count one would obtain directly from the algorithm.
This takes into account an overcounting of geometries due to toric equivalences,
which arise when there is a $GL(3,\bZ)$ transformation on the toric rays
that preserves the cone structure of the fan. In general, there may
be many such equivalences between elements of two ensembles $S_{\Delta_i^\circ}$
and $S_{\Delta_j^\circ}$, where $\Delta_{i,j}^\circ$ are any two $3d$ reflexive
polytopes. However, to ensure that the count  \eqref{eqn:sdocsdtccounts} is accurate,
we only need to consider whether there are equivalences between
two elements in $\sdoc$, two in $\sdtc$, or one in $\sdoc$ to $\sdtc$. It is sufficient to consider $GL(3,\bZ)$ actions
on the ground, i.e. on the facets. This follows from the fact that rays of different height cannot be exchanged under automorphisms of the fan. First note that points in a hyperplane remain in one after a $GL(3,\bZ)$ transformation, and points
in a line remain in a line by linearity. Facets must therefore map to
facets. The big facets in $\doc$ and $\dtc$ cannot map
to other facets by point counting, and therefore they
must map to themselves. There is no non-trivial map
taking the big facet in $\dtc$ to itself, but there is
a $\bZ_3$ rotation taking the big facet in $\doc$ to itself, giving a factor of $1/3$ in $\sdoc$. There is no non-trivial
map between the big facets in $\doc$ and $\dtc$, and therefore
$\sdtc\cap \sdoc  = \emptyset.$ Together, these establish
\eqref{eqn:sdocsdtccounts}.

\vspace{.2cm}
\noindent \textbf{Multiplicities of Vanishing and Resolutions.}
In discussing what constitutes an allowed $4d$ F-theory geometry $X\to B$, we mentioned
certain criteria on multiplicities of vanishing that we now elaborate on. In 
\cite{Hayakawa,Wang} it was shown that if a Calabi-Yau variety has at worst
canonical singularities, then it is at finite distance 
from the bulk of the moduli space in the Weil-Petersson metric. 
This criterion is general and therefore applies to elliptic fibrations such as $X$.
The reason that it is physically relevant is that if $X$ has worse singularities
than a nearby Calabi-Yau $X'$ that is known to represent a physical configuration,
and $X$ is at finite distance in the moduli space from $X'$, we should expect that
$X$ is also a physical configuration. This criterion, which we refer to as the
Hayakawa-Wang criterion, gives a related criterion by studying elliptic
fibrations~\cite{Grassi1991, Candelas:2000nc, Morrisonunp}:
if $mult_D (f,g) < (4,6)$, $mult_C(f,g) < (8,12)$, and $ord_p(f,g) < (12,18)$ 
for all divisors $D\subset B$, curves $C\subset B$, and points $p\subset B$, respectively,
then  $X$ has at worst
canonical singularities and is at finite distance in the
moduli space due to the Hayakawa-Wang criterion\footnote{We thank D. Morrison for discussions on
this and related points.}. Here ``less than'' means that at least one of the multiplicities or orders is strictly less than the given multiplicity or order. We now translate this multiplicity of vanishing (MOV) condition to constraints on the height of the tree.
\begin{prop}
\label{prop:heightcrit}
Suppose each leaf $v\in B$ has height $h_v\leq 6$. Then $B$ has
no $(4,6)$ divisors.
\end{prop}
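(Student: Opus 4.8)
The plan is to reduce ``$(4,6)$ divisor'' to a combinatorial condition on lattice points of the anticanonical polytope of $B$, and then exhibit, uniformly for every toric divisor of $B$, a single monomial of a generic $g$ that already vanishes to order at most $5$ along it.

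First I would record the toric dictionary. The triangulation data makes $B$ a smooth complete toric variety with rays $\{\rho\}$ and $-K_B=\sum_\rho D_\rho$, so the monomials of a generic $g\in\Gamma(\mathcal{O}(-6K_B))$ are the characters $\chi^m$ with $m$ in $P\cap M$, where $P=\{m\in M_\mathbb{R}:\langle m,\rho\rangle\geq-6 \text{ for every ray }\rho\}$, and $\mathrm{ord}_{D_v}(g)=\min\{\langle m,v\rangle+6:m\in P\cap M\}$. Since a $(4,6)$ divisor must in particular satisfy $\mathrm{ord}_{D_v}(g)\geq 6$, it is enough to produce, for each ray $v$ of $B$, a lattice point $m\in P$ with $\langle m,v\rangle\leq-1$. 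I would also recall from the construction that the rays of $B$ are exactly the nonzero lattice points of $\Delta^\circ$ (the roots, of height $1$) together with the leaves $\rho=av_1+bv_2+cv_3$ with $v_1,v_2,v_3$ lattice points of a common facet of $\Delta^\circ$, $a,b,c\geq 0$, and $a+b+c=h_\rho$.

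The crucial step, and the one I expect to be the main obstacle, is to show that the height hypothesis forces every vertex $u_F$ of the polar dual polytope $\Delta=(\Delta^\circ)^*$ to lie in $P\cap M$; here $u_F$ is the (lattice, by reflexivity) inner normal of the facet $F$ of $\Delta^\circ$, normalized so that $\langle u_F,w\rangle=-1$ for every lattice point $w$ of $F$ and $\langle u_F,w\rangle\geq-1$ for every lattice point $w$ of $\Delta^\circ$ --- in particular $-\langle u_F,w\rangle$ is precisely the height of any leaf $w$ built over $F$. For a root $\rho$ of $B$ one has $\langle u_F,\rho\rangle\geq-1\geq-6$, and for a leaf $\rho=av_1+bv_2+cv_3$ built over some facet $F'$, linearity together with $\langle u_F,v_i\rangle\geq-1$ gives $\langle u_F,\rho\rangle\geq-(a+b+c)=-h_\rho\geq-6$, using the hypothesis $h_\rho\leq 6$. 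Hence $\langle u_F,\rho\rangle\geq-6$ for all rays $\rho$, i.e. $u_F\in P\cap M$. The real work here is in making the two inputs precise: that the tree construction yields only rays of the stated form, and that height $\leq 6$ is exactly the bound under which $\chi^{u_F}$ remains an available monomial of $g$ on the blown-up base.

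To finish, fix any ray $v$ of $B$. If $v$ is a root, it is a nonzero lattice point of $\Delta^\circ$, hence lies on some facet $F$, and $\langle u_F,v\rangle=-1$ gives $\mathrm{ord}_{D_v}(g)\leq 5$. If $v$ is a leaf of height $h_v\geq 2$ built over a facet $F$, then $\langle u_F,v\rangle=-h_v\leq-2$ gives $\mathrm{ord}_{D_v}(g)\leq 6-h_v\leq 4$. In either case $\mathrm{ord}_{D_v}(g)\leq 5<6$, so $D_v$ is not a $(4,6)$ divisor; as $v$ was arbitrary, $B$ has no $(4,6)$ divisors.
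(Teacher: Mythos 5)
Your proof is correct and follows essentially the same route as the paper's: you exhibit the dual-facet lattice point $m_F$ (your $u_F$) as a surviving monomial of $g$, using linearity and $h_v\leq 6$ to verify $\langle u_F,\rho\rangle\geq -6$ for every ray, and conclude $\mathrm{ord}_{D_v}(g)\leq 5$ for each divisor. Your version is slightly more explicit about the toric order-of-vanishing dictionary and the root/leaf case split, but the key idea and structure coincide with the paper's proof.
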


\begin{proof}
Consider a facet
$F$, which has a unique associated point $m_F$ satisfying
$(m_F,\tilde v)=-1 \,\, \forall \tilde v \in F$; furthermore
since $m_F\in\Delta$, 
$(m_F,v)\geq -1\,\, \forall v\in\Delta^\circ$. Now
suppose $h_v \leq 6/n, \,\, n \in \mathbb{N}$  
for all rays $v=av_1 + b v_2 + c v_3$
in $B$, with $v_i$ $3d$ cone vertices in $B_i$ . Then 
$(nm_F, v)\geq -n(a+b+c)=-nh_v\geq -6$ for all rays $v$ and
therefore $nm_F \in \Delta_g$. Here we denote $\Delta_{f},\Delta_{g}$ as the polytopes corresponding to $\Gamma (\mathcal{O}(-4 K_B)),\Gamma (\mathcal{O}(-6 K_B))$, respectively. If $h_v\leq 6\,\, \forall v$, then $m_F\in\Delta_g$. This monomial has multiplicity of vanishing
$(v,m_F)+6=5$ for any $v$ in or above $F$, which protects
$v$ from being a $(4,6)$ divisor. If
$h_v\leq 6 \,\, \forall v$ then $m_F \in \Delta_g\,\, \forall F$
and there is a monomial that prevents
each divisor from being $(4,6)$.
\end{proof}

It is also simple to see that in our ensemble, $f$ and $g$ can only vanish to multiplicities less than $(8,12)$ along curves and orders less than $(12,18)$ at points, respectively. Consider any toric curve $C=D_s\cdot D_t \subset B$. Take $v_s=\sum_i a_{i,s} v_i$ and $v_t=\sum_i a_{i,t} v_i$ and define $a:=\sum_i a_{i,s}$ and $b:=\sum_i a_{i,t}$. Let $F$ be a facet
on which or above which $v_s$ and $v_t$ sit, with $m_F$ the dual facet. As an
element of $\Delta_g$ the associated monomial may be written
$s^{\vev{m,v_s}+6}t^{\vev{m,v_t}+6}\times \dots$,
and the monomial vanishes to multiplicity $\vev{m,v_s}+\vev{m,v_t}+12=-a-b+12$ along $C$. For $g$ to vanish to multiplicity $12$ along a curve, this requires $a+ b <0$, which cannot happen. A similar argument shows that $g$ cannot vanish to order $18$ or higher at points.
On the other hand, our ensemble is generated by a series of repeated blowups along curves and points, and one can pass to a Calabi-Yau minimal Weierstrass model only if the MOV is $\geq (4,6)$ for a curve, and $\geq (8,12)$  for a point. One can achieve the required MOV by tuning in complex structure moduli space, but one has to ensure that no infinite distance singularities (as in the above) are introduced in the process. However, it is simple to see that the desired MOV can be achieved, without introducing any disallowed singularities, by simple tuning without turning off the monomial corresponding to $m_F$, for all $F$.

\vspace{.2cm}
\noindent \textbf{7-Branes and Gauge Enhancement.}
We now prove some useful results that allow us to determine a universal minimal gauge sector in our ensemble, as well as show that NH 7-branes are ubiquitous.
\begin{prop}
\label{prop:NH7fromsingletree}
Suppose $\exists$ $v$ in or above a facet $F$,i.e. $v=av_1+bv_2+cv_3$ with $v_i$ simplex vertices in $F$, such that $h_v\geq 2$. Then there is a non-Higgsable seven-brane on the divisor associated
to each interior point of $F$.
\end{prop}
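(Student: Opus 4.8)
The plan is to reduce the claim to a computation of multiplicities of vanishing and then exploit the facet geometry, in the same spirit as the proof of Prop.~\ref{prop:heightcrit}. First I would recall that a non-Higgsable seven-brane sits on the divisor $D_{v_i}$ associated to a ray $v_i$ exactly when the generic $f$ and $g$ both vanish along $D_{v_i}$, i.e.\ $(l_i,m_i)\geq(1,1)$. By standard toric geometry the sections of $\mathcal{O}(-4K_B)$ and $\mathcal{O}(-6K_B)$ are spanned by monomials $\chi^m$ with $m\in\Delta_f$ and $m\in\Delta_g$ respectively, and $\chi^m$ vanishes along $D_{v_i}$ to order $(m,v_i)+4$, resp.\ $(m,v_i)+6$. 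Hence $l_i=\min_{m\in\Delta_f}((m,v_i)+4)$ and $m_i=\min_{m\in\Delta_g}((m,v_i)+6)$, and since $v_i$ is itself a ray of $B$ one automatically has $(m,v_i)\geq-4$ on $\Delta_f$ and $(m,v_i)\geq-6$ on $\Delta_g$. So it will suffice to show that, for every interior point $v_i$ of $F$, no $m\in\Delta_f$ has $(m,v_i)=-4$ and no $m\in\Delta_g$ has $(m,v_i)=-6$.

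Next I would fix an interior point $v_i$ of $F$ and let $m_F$ be the point dual to $F$, so $(m_F,\tilde v)=-1$ for all $\tilde v\in F$ and, since $m_F\in\Delta$, $(m_F,v)\geq-1$ for all $v\in\Delta^\circ$. Suppose $m\in\Delta_f$ with $(m,v_i)=-4$. The map $\tilde v\mapsto(m,\tilde v)$ is affine on the $2$-plane spanned by $F$; it is $\geq-4$ at each vertex of $F$ (each vertex is a ray of $B$), hence $\geq-4$ on all of $F$, and it attains $-4$ at the relative-interior point $v_i$. An affine function on a $2$-dimensional polygon that attains its minimum in the relative interior is constant, so $(m,\tilde v)=-4$ for every $\tilde v\in F$. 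Since the affine hull of $F$ is the hyperplane $\{x:(m_F,x)=-1\}$, which does not pass through the origin and hence linearly spans the lattice, the covectors $m$ and $4m_F$ agree on a spanning set and thus $m=4m_F$. But by hypothesis there is a leaf $v=av_1+bv_2+cv_3$ above $F$ with $h_v=a+b+c\geq2$ (for an edge tree take $c=0$), and then $(4m_F,v)=-4h_v\leq-8<-4$, so $4m_F\notin\Delta_f$ — a contradiction. Hence $l_i\geq1$.

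Running the identical argument for $g$: if $m\in\Delta_g$ with $(m,v_i)=-6$ then $m=6m_F$, while the height-$\geq2$ leaf $v$ above $F$ gives $(6m_F,v)=-6h_v\leq-12<-6$, so $6m_F\notin\Delta_g$, again a contradiction; hence $m_i\geq1$. Therefore $(l_i,m_i)\geq(1,1)$ for every interior point $v_i$ of $F$, so a non-Higgsable seven-brane sits on each such $D_{v_i}$, which is the claim. The step I expect to be the main obstacle — really the only non-routine point — is recognizing that the unique monomial of $\Delta_f$ (resp.\ $\Delta_g$) capable of obstructing the vanishing of $f$ (resp.\ $g$) at an interior point of $F$ is precisely $4m_F$ (resp.\ $6m_F$): before any blow-up these monomials lie in $\Delta_f$ and $\Delta_g$, so the interior divisors of $F$ carry no non-Higgsable seven-brane in $B_i$, and it is exactly the presence of a single height-$\geq2$ leaf above $F$ that deletes them. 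Everything else is routine toric bookkeeping, though one should be careful to check the edge-tree case as well, where the relation $(m_F,v)=-h_v$ and hence the whole argument still goes through.
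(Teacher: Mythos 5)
Your proof is correct, and its overall skeleton matches the paper's: identify $4m_F$ and $6m_F$ as the only monomials that could prevent generic $f$ and $g$ from vanishing on divisors interior to $F$, then observe that a single leaf $v$ with $h_v\geq 2$ forces $(4m_F,v)=-4h_v\leq -8$ and $(6m_F,v)=-6h_v\leq -12$, so these monomials are absent from $\Delta_f$ and $\Delta_g$, giving $(l_i,m_i)\geq(1,1)$ on every interior point. Where you genuinely differ is in how the uniqueness of the dangerous monomials is established. The paper argues at level one --- for $p$ interior to $F$, $(m,p)=-1$ with $m\in\Delta$ forces $m=m_F$ --- and then boosts this to levels four and six by invoking normality of $3d$ reflexive polytopes, i.e.\ that every lattice point of $\Delta_f$ (resp.\ $\Delta_g$) decomposes as a sum of four (resp.\ six) lattice points of $\Delta$. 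You instead work directly at levels four and six with a convexity argument: since $m\in\Delta_f$ pairs $\geq -4$ with every ray of $B$, in particular with the vertices of $F$, the affine function $(m,\cdot)$ on $F$ attaining the value $-4$ at a relative-interior lattice point must be constant on $F$, and because the affine hull of $F$ misses the origin this forces $m=4m_F$ (and likewise $m=6m_F$ for $g$). Your route buys self-containedness --- it needs only elementary convexity and the fact that the facet's vertices are rays of $B$, avoiding the normality input, which is a nontrivial lattice-polytope fact the paper must cite --- while the paper's normality argument is shorter once that fact is granted and is the same tool it leans on elsewhere. Both arguments are valid; just make sure to state (as you implicitly do) that the equality $m=4m_F$ holds over $\mathbb{Q}$ from agreement on three affinely independent points of $F$, which suffices since both are integral covectors.
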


\begin{proof}
Then $(6 m_F,v)=-6h_v \leq -12$ implies $6m_F \notin \Delta_g$.
Similarly, $4m_F \notin \Delta_f$. Since any point $p$ interior to $F$ has 
$(m,p)=-1 \iff m=m_F$ and reflexive polytopes of dimension three
are normal, i.e. any $m_f\in \Delta_f$ ($m_f \in \Delta_g$) has
$m_f = \sum_i m_i, m_i\in \Delta$ ($m_g = \sum_i m_i, m_i\in \Delta$),
it follows that $(m_f,p)=-4 \iff m_f = 4m_F$ and $(m_g,p)=-6 \iff m_g = 6m_F$. Therefore, if there is any tree on $F$
then $4m_F \notin \Delta_f$ and $6m_F\notin \Delta_g$. By normality,
for any $p$ interior to $F$ this gives
$\nexists m_f \in \Delta_f | (m_f,p)=-4$ and
$\nexists m_g \in \Delta_g | (m_g,p)=-6$, and therefore $ord_p(f,g) > (0,0)$,
which implies there is a non-Higgsable seven-brane on the divisor associated to $p$.
\end{proof}

\begin{prop}
\label{prop:E8roots}
Let $v$ be a leaf $v=av_1 + bv_2 + cv_3$ with $v_i$ simplex vertices in $F$. If the associated divisors $D_{1,2,3}$
carry a non-Higgsable $E_8$ seven-brane, and if $v$ has height
 $h_v=1,2,3,4,5,6$ it also has Kodaira fiber $F_v=II^*,IV^*_{ns},I^*_{0,ns},IV_{ns},II,-$
and gauge group $G_v=E_8,F_4,G_2,SU(2),-,-$, respectively.
\end{prop}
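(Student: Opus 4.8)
The plan is to compute, for a leaf $v = av_1+bv_2+cv_3$ of height $h_v = a+b+c$, the multiplicities of vanishing $(\text{ord}_v(f),\text{ord}_v(g),\text{ord}_v(\Delta))$ and then read off the Kodaira fiber and gauge group from Table~\ref{tab:gauge}. The key input is that $D_1,D_2,D_3$ carry $E_8$, which by the Kodaira classification means $\text{mult}(f,g)\geq(4,5)$ along each $D_i$ with $n_i = 10$; equivalently, writing $F$ for the facet containing $v_1,v_2,v_3$ and $m_F$ its dual point, the monomials of $\Delta_f$ (resp.\ $\Delta_g$) that survive are those $m$ with $(m,v_i)\geq -4+4 = 0$ for all $i$ (resp.\ $(m,v_i)\geq -1$, since $6m_F$ being excluded along each $E_8$ divisor forces the surviving $g$-monomials to vanish to order $\geq 5$ there). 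I would phrase this cleanly: the effective monomials contributing to $f$ near $v$ lie in the cone dual to $\{v_1,v_2,v_3\}$ shifted appropriately, and similarly for $g$.

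First I would establish the lower bounds on $\text{ord}_v(f)$ and $\text{ord}_v(g)$: for any monomial $m$ that survives in $\Delta_f$ given the $E_8$ constraint on all three $D_i$, linearity gives $(m,v) = a(m,v_1)+b(m,v_2)+c(m,v_3) \geq 4(a+b+c)\cdot\frac{?}{}$ — more carefully, the surviving $f$-monomials satisfy $(m,v_i)+4 \geq 4$, i.e.\ $(m,v_i)\geq 0$, so $(m,v)\geq 0$ and hence $\text{ord}_v(f) = (m,v)+4 \geq 4$; likewise surviving $g$-monomials have $(m,v_i)\geq 1$ wait — the $E_8$ condition is $\text{mult}_{D_i}(g)\geq 5$, i.e.\ $(m,v_i)+6\geq 5$, so $(m,v_i)\geq -1$, giving $(m,v)\geq -(a+b+c) = -h_v$ and $\text{ord}_v(g)\geq 6 - h_v$. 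Hmm, I must instead use the stronger constraint. Let me restate: the relevant bound is that the minimal $f$-order over $D_i$ is exactly $4$ (from $E_8$ being $II^*$ not worse on a generic base), so the "tightest" surviving monomial has $(m,v_i) = 0$, and then along $v$ we get $\text{ord}_v(f)\geq 4$; the minimal surviving $g$-monomial near an $E_8$ point has $\text{ord}(g) = 5$ so $(m,v_i) = -1$, and summing, $\text{ord}_v(g)\geq 6 - h_v$. Then the second step is to show these bounds are \emph{saturated} for generic complex structure — i.e.\ there genuinely exists a monomial achieving $(m,v_i)=0$ for $f$ and, for $g$, the precise order along $v$ is $\max(6-h_v, \text{something})$ capped appropriately. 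This is where I would invoke normality of $3d$ reflexive polytopes (as in Prop.~\ref{prop:NH7fromsingletree}) to guarantee the extremal monomials actually appear.

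Having $\text{ord}_v(f)\geq 4$ and $\text{ord}_v(g) = \min(6-h_v+\text{(correction)},\ldots)$, I would then tabulate: for $h_v=1$, $\text{ord}_v(g)=5$ exactly and $\text{ord}_v(f)\geq 4$, giving $II^*$, i.e.\ $E_8$ on the root; for $h_v=2$, $\text{ord}_v(g)=4$, $\text{ord}_v(f)\geq 4$ — but $\text{mult}(f,g,\Delta)=(\geq 3,4,8)$ is type $IV^*$, and since the base is generic the relevant monodromy cover is non-split, giving $IV^*_{ns}$ with $F_4$; for $h_v=3$: $\text{ord}_v(g) = 3$ hence $I_0^*$, non-split giving $G_2$ (not $SO(7)$ or $SO(8)$, because the minimality of the construction leaves the generic non-split case); $h_v = 4$: $\text{ord}_v(g) = 2$, $\text{ord}_v(f)\geq ?$ — here I need $\text{ord}_v(f) = 2$ as well to land in type $IV$ ($\text{mult}=(\geq 2,2,4)$), non-split giving $SU(2)$; $h_v=5$: $\text{ord}_v(g)=1$, forcing type $II$ with no gauge group; $h_v=6$: $\text{ord}_v(g)=0$ generically, smooth fiber, no gauge group. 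In each case I must also check $\text{ord}_v(f)$ does not drop below the value that would change the Kodaira type (e.g.\ for $h_v=4$ confirming $f$ vanishes to order exactly $2$, not less, which would push toward type $III$ or $I_0$).

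The main obstacle I expect is pinning down the \emph{non-split} (``$ns$'') structure and the exact $f$-orders rather than just lower bounds: Table~\ref{tab:gauge} shows that $IV^*$, $I_0^*$, and $IV$ each admit two or three possible gauge groups depending on a monodromy/splitting condition, and I need to argue that in the generic complex-structure limit of our toric construction the cover is always non-split (so one gets $F_4, G_2, SU(2)$ rather than $E_6, SO(8)/SO(7), SU(3)$). This requires looking at the residual polynomial in the Tate-like expansion along $v$ and showing it is irreducible over the function field of $D_v$ for generic coefficients — essentially a genericity argument about which monomials survive the $E_8$ truncation on $D_1,D_2,D_3$. I would handle this by exhibiting, in each height case, the explicit leading term of $f$ and $g$ restricted to a neighborhood of $D_v$ and checking the discriminant of the associated cubic/quadratic is a non-square, which is the standard criterion; the reflexive-polytope normality again ensures enough monomials are present to make the generic statement valid. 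The remaining checks — that $\text{ord}_v(f)$ and $\text{ord}_v(g)$ take exactly the claimed values, giving $\text{ord}_v(\Delta) = \min(3\,\text{ord}_v(f), 2\,\text{ord}_v(g))$ consistent with the Kodaira column — are then routine linear-algebra bookkeeping over the lattice.
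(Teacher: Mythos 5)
Your overall strategy is the same as the paper's: use the $E_8$ condition on $D_{1,2,3}$ to get $(m_f,v_i)\geq 0$ and $(m_g,v_i)\geq -1$, hence by linearity $\mathrm{ord}_v(f)\geq 4$ and $\mathrm{ord}_v(g)\geq 6-h_v$, then pin the $g$-order exactly and read off the fiber from Table~\ref{tab:gauge}, handling the split/non-split ambiguity for $h_v=2,3,4$ separately. Two corrections on execution, though. First, the saturation $\mathrm{ord}_v(g)=6-h_v$ does not come from normality of the reflexive polytope (normality is what drives Prop.~\ref{prop:NH7fromsingletree}); it comes from the height criterion of Prop.~\ref{prop:heightcrit}: since every leaf has $h\leq 6$, the dual point $m_F$ of the facet lies in $\Delta_g$, and $(m_F,v)+6=6-h_v$ exactly, so this single monomial provides the upper bound. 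Second, your worry about needing $\mathrm{ord}_v(f)$ exactly $2$ at $h_v=4$ is moot: the $IV$ row of Table~\ref{tab:gauge} only requires $l\geq 2$, and your own bound already gives $\mathrm{ord}_v(f)\geq 4$, which is consistent with types $IV^*$, $I_0^*$, $IV$, $II$ at the computed $g$-orders without any further exactness statement about $f$.

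The more substantive issue is the non-split determination, which you correctly flag as the main obstacle but leave as a genericity plan ("discriminant is a non-square for generic coefficients"). The paper's point is sharper and is the missing idea you would need to actually close this: because every $m_g\in\Delta_g$ satisfies $(m_g,v_i)\geq -1$, the minimal order $(m_g,v)=-h_v$ can only be attained when $(m_g,v_i)=-1$ for all three $i$, and the unique such point is $m_F$. Hence the leading coefficient of $g$ along $D_v$ is a \emph{single} monomial, which restricts as $x_i^5$ (odd exponents, since $(m_F,v_i)+6=5$), so it is never a perfect square (nor, for the $I_0^*$ cubic, a cube); the fibers are non-split for every member of the ensemble, not merely for generic coefficients. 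A pure genericity argument is not automatically sufficient: one must rule out the possibility that the space of leading coefficients consists entirely of perfect squares, which is exactly what the uniqueness-of-$m_F$ observation does. With that uniqueness step supplied, your proposal reproduces the paper's proof; without it, the $F_4$/$G_2$/$SU(2)$ versus $E_6$/$SO(7),SO(8)$/$SU(3)$ distinction at $h_v=2,3,4$ remains unproved.
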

\begin{proof}
The height criterion
gives $mult_v(g)\leq 6-h_v$. If $v=av_1+bv_2+cv_3$ with $v_i$ each carrying $E_8$,
then 
$(m_f,v_i)\geq 0, (m_g,v_i)\geq -1\,\,, \forall m_f\in \Delta_f 
$ and $\forall m_g\in \Delta_g$.
This gives $(m_f,v)\geq 0$, $(m_g,v)\geq-(a+b+c)= -h_v$. Together, we see
$mult_v(f)\geq 4$, $mult_v(g)=6-h_v$. For $h_v= 1,5,6$ this fixes $G_v$, but 
to determine $G_v$ for $h_v=2,3,4$ we must study the split condition. A necessary
condition is that there is one monomial $m_g\in \Delta_g$ such that $(m_g,v)+6=6-h_v$,
and since $m_F \in \Delta_g$ always, where $F$ is the facet in which $v_i$ lie,
then $m_g=m_F$. Moreover, the monomial $m$ in $g$ associated to $m_F$ must be a perfect
square; since $(m_F,v_i)+6=5$, $m\sim x_i^5$ and $m$ is not a perfect square. 
Therefore the fibers are all non-split. This establishes the result.
\end{proof}

\noindent{\bf Blowdowns and Oda's Factorization Conjectures.}  We have obtained all trees from a sequence of blow-ups from an
initial triangle on the ground. It may be possible to arrive at additional consistent tree
configurations via  blowing down at intermediate steps. We did not consider
such possibilities, for combinatorial reasons.

However, such questions about mixing blow-ups and blow-downs are the subject
of Oda's Weak and Strong Factorization conjectures. The former states that any proper birational morphism $X \dashedrightarrow Y$ of complete, nonsingular varieties in characteristic zero factors into a sequence of smooth blow-ups and blow-downs. The latter conjectures
that the morphism factors into a sequence of successive blow-ups followed by a sequence of
successive blow-downs; it is open in dimension 3 and higher.

An interesting physical question arises in this context. By weak Oda, two trees are
related by a sequence of blow-ups and blow-downs. However, if each sequence between
fixed $X$ and $Y$ gives rise to an intermediate variety $X_i$ with a $(4,6)$ divisor,
then the moduli space of four-dimensional F-theory compactifications is disconnected.

\vspace{.2cm}
\noindent{\bf Acknowledgements.} We thank W. Cunningham, T. Eliassi-Rad, J. Goodrich,  D. Krioukov, B. Nelson, W. Taylor, J. Tian,
and especially D.R. Morrison for discussions. 
J.H. is supported by
NSF Grant PHY-1620526.

\bibliography{refs}

\end{document}